\newcommand{\vect}[1]{\ensuremath{\boldsymbol{\mathrm{#1}}}}
\newtheorem{theorem}{Theorem}
\newtheorem{Definition}{Definition}
\newtheorem{Remark}{Remark}
\newtheorem{Assumption}{Assumption}
\title{\LARGE \bf
Intersection of Reinforcement Learning and Bayesian Optimization for Intelligent Control of Industrial Processes: A Safe MPC-based DPG using Multi-Objective BO}
\author{Hossein Nejatbakhsh Esfahani, Javad Mohammadpour Velni
\thanks{*This work was supported by the US National Science Foundation under award \#2302219.}
\thanks{H. N. Esfahani and J. M. Velni are with the Department of Mechanical Engineering, Clemson University, Clemson, SC, USA.
		{\tt\small \{hnejatb, javadm\}@clemson.edu%
		}.
}}
\begin{document}

\maketitle

\begin{abstract}
Model Predictive Control (MPC)-based Reinforcement Learning (RL) offers a structured and interpretable alternative to Deep Neural Network (DNN)-based RL methods, with lower computational complexity and greater transparency. However, standard MPC-RL approaches often suffer from slow convergence, suboptimal policy learning due to limited parameterization, and safety issues during online adaptation. To address these challenges, we propose a novel framework that integrates MPC-RL with Multi-Objective Bayesian Optimization (MOBO). The proposed MPC-RL-MOBO utilizes noisy evaluations of the RL stage cost and its gradient, estimated via a Compatible Deterministic Policy Gradient (CDPG) approach, and incorporates them into a MOBO algorithm using the Expected Hypervolume Improvement (EHVI) acquisition function. This fusion enables efficient and safe tuning of the MPC parameters to achieve improved closed-loop performance, even under model imperfections. A numerical example demonstrates the effectiveness of the proposed approach in achieving sample-efficient, stable, and high-performance learning for control systems.
\end{abstract}

\section{Introduction}
Reinforcement Learning (RL) is a powerful tool for tackling Markov Decision Processes (MDP) without relying on a model of the probability distributions underlying the state transitions of the real system \cite{sutton}. More precisely, most RL methods rely purely on observed state transitions, and realizations of the stage cost in order to enhance the performance of the control policy. RL seeks to determine an optimal policy through interaction with the environment. However, learning complex behaviors often demands a large number of samples, which can be impractical in real-world applications.

In contrast, actively selecting informative samples lies at the core of Bayesian Optimization (BO), which builds a probabilistic surrogate model of the objective function based on past evaluations to guide the selection of future samples. BO is a probabilistic optimization technique well-suited for optimizing black-box functions that are expensive to evaluate. It operates by constructing a surrogate model, typically a Gaussian Process (GP), to approximate the objective function and guide the search for optimal parameters \cite{7352306}. When optimization involves multiple conflicting objectives, Multi-Objective Bayesian Optimization (MOBO) extends this framework by modeling each objective with a separate surrogate and selecting evaluation points that improve the approximation of the Pareto front. Rather than seeking a single optimum, MOBO aims to identify a diverse set of trade-off solutions that are Pareto optimal, meaning no objective can be improved without degrading another \cite{mobo1,mobo2}.

The integration of Bayesian Optimization (BO) with Model predictive control (MPC) has recently emerged as a powerful framework for tackling complex control challenges in diverse application areas. MPC is often selected for its ability to handle both input and state constraints \cite{rawling}. In this context, \cite{HIRT2024208} introduced a safe and stability-aware BO framework to learn the cost functions of the MPC, where a parameterized MPC controller is adaptively tuned to maximize closed-loop performance despite model-plant mismatch. Furthermore, in \cite{KUDVA2024458}, a high-dimensional BO method was proposed for sample-efficient MPC tuning, enabling effective optimization in complex control settings. The authors in \cite{mobo3} leveraged MOBO to achieve performance-oriented model adaptation for an MPC scheme. To address multiple conflicting objectives in the adaptive control of water reservoir systems, a MOBO algorithm was used in \cite{mobo4} to automatically select the optimal weights for the MPC cost function. To tune an MPC scheme for the control of wind farms with a high-dimensional design space, a MOBO framework over sparse subspaces was proposed in \cite{mobo5}.

Due to uncertainties and unknown dynamics, accurate models of dynamical systems are often difficult to obtain. Even if accurate models are available, they may be in general too complex to be used in MPC schemes. Consequently, performance degradation often occurs due to inaccurate models used in the MPC schemes. Furthermore, choosing a suitable open-loop cost function and constraints to achieve the desired closed-loop performance while guaranteeing safety remains challenging. To mitigate the limitations of imperfect models, data-driven and machine learning (ML) techniques have been recently employed to enhance the accuracy of models used in MPC frameworks \cite{8909368,MullerAllgower2021,muntwiler22a,KARG2021107266}. However, a well-fitted model does not necessarily guarantee satisfactory closed-loop control performance, as control objectives may remain unmet despite accurate system representation. 

To overcome this challenge, recent works \cite{10808167,8701462,10542325,10644368} have explored the integration of Markov Decision Process (MDP) principles with MPC. By strategically adjusting the stage and terminal cost functions, these approaches allow the MPC framework to replicate the optimal policy of a corresponding Markov Decision Process (MDP). Building on this idea, recent studies have introduced MPC-based RL algorithms that aim to learn cost functions directly, with the goal of improving closed-loop performance. Rather than focusing solely on improving model accuracy, these methods leverage the parametric representation of the cost function as an alternative to model learning, motivated by the intrinsic link between model-based predictions and the structure of the MPC cost function.

The MPC-based RL method offers a more interpretable framework with lower complexity compared to deep neural network (DNN)-based RL algorithms. This is due to the flexible parameterization of MPC, which allows for the numerical design of its cost function, thereby facilitating an efficient and straightforward implementation. However, this approach still requires a large number of interactions with the environment to achieve convergence, which may be impractical for real-world control systems. Although MPC-based RL has the theoretical potential to recover an optimal policy given a sufficiently expressive parameterization, this condition is rarely met in practice. In real-world applications, parameterization is often too limited to capture the full complexity of the optimal solution. Moreover, simplistic choices for the value function or terminal cost may overlook important aspects of MPC tuning, further contributing to suboptimal performance. Practical challenges such as the local convergence behavior of RL algorithms can also hinder optimality. Overcoming these limitations typically requires well-chosen initializations, an observation that holds true across many RL-based approaches.

To address these issues, we propose a fusion of MPC-based RL and MOBO, utilizing noisy observations of the RL stage cost function along with its gradient estimates based on the Policy Gradient (PG) theorem \cite{silver2014deterministic}. This approach then enables a sample-efficient MPC-based RL method with fast convergence, while addressing safety concerns during the parameter updates of the MPC. To evaluate the gradient of the closed-loop performance a.k.a. the RL stage cost, we use a Compatible Deterministic Policy Gradient (CDPG) approach, in which a parameterized MPC scheme provides both the parametric policy and the parametric value function required by the CDPG. We then incorporate the evaluations of the RL stage cost and its gradient, obtained from the CDPG, into a MOBO algorithm that uses the Expected Hypervolume Improvement (EHVI) acquisition function to update the MPC parameters, aiming to safely achieve optimal closed-loop performance even in the presence of model imperfections, a.k.a model misspecification, in the underlying MPC scheme. Moreover, we leverage the proposed MOBO-based CDPG framework to enable a safe and stability-aware learning mechanism for MPC schemes.

This paper is organized as follows. Section \ref{sec:2} describes the MPC-based CDPG approach. Section \ref{sec:3} provides background on Bayesian Optimization and Multi-Objective Bayesian Optimization (MOBO). In Section \ref{sec:4}, we present the proposed MOBO-based RL method for learning a parameterized MPC scheme. A numerical example demonstrating the performance of the proposed method is given in Section \ref{sec:5}, and finally, the paper is concluded in Section \ref{sec:6}.

\section{MPC-based Reinforcement Learning}\label{sec:2}
Policy Gradient (PG) is a well-known policy-based reinforcement learning method that attempts to optimize a policy directly, rather than indirectly, through a value function. This section presents a Compatible Deterministic Policy Gradient (CDPG) based on a parameterized MPC, where the deterministic policy $\vect u_k=\vect\pi_{\vect\theta}\left(\vect x_k\right)$ delivered by the parameterized MPC scheme dedicates a deterministic action to each state $\vect x_k$ \cite{silver2014deterministic}. We then assume that the real system to be controlled is modeled as a discrete MDP with (potentially) stochastic state transition dynamics as
\begin{align}\label{eq:mdp_dyn}
    \vect x_{k+1}\sim\mathbb{P}[\cdot|\vect x_k,\vect u_k].
\end{align}
It is noted that the notation in \eqref{eq:mdp_dyn} is standard in the MDP literature, whereas the control literature typically adopts a different notation as
\begin{align}\label{eq:real_dyn}
    \vect x_{k+1}=\vect f(\vect x_k,\vect u_k,\vect d_k),
\end{align}
where $\vect{d}_k$ is a stochastic variable and $\vect{f}$ is a possibly nonlinear function. 
\subsection{MPC as an approximator for RL}
Let us denote the RL stage cost (baseline) associated with the MDP by $L\left(\vect x_k,\vect u_k\right)$, defined as
\begin{align}
    L\left(\vect x_k,\vect u_k\right)=
        l\left(\vect x_k,\vect u_k\right)+\mathcal{I}_\infty\left(\vect h\left(\vect x_k,\vect u_k\right)\right),
\end{align}
where $\vect h\left(\vect x_k,\vect u_k\right)\leq 0$ collects the inequality constraints. We use the following indicator function:
\begin{align}
   \mathcal{I}_\infty\left(x\right)=\left\{\begin{matrix}
\infty\quad \text{if}\quad x>0\\\quad0\qquad\text{otherwise}
\end{matrix}\right..
\end{align}
The value function associated with the true MDP then reads as
\begin{align}\label{eq:true_V}
     V^{\vect\pi}\left(\vect x_k\right)=\mathbb{E}\Big[\sum_{i=0}^\infty\gamma^{i} L\left(\vect x_i,\vect\pi\left(\vect x_i\right)\right)\Big],
\end{align}
where $\vect x_0=\vect x_k$, and $\gamma\in(0,1]$ denotes the discount factor. The optimal value function is obtained by $V^\star\left(\vect x_k\right)=\min_{\vect\pi} V^{\vect\pi}\left(\vect x_k\right)$. The optimal policy then reads $\vect\pi^\star\left(\vect s\right)=\arg\min_{\vect\pi} V^\star\left(\vect x_k\right)$. We next show that an MPC scheme can be a valid approximator in the context of RL such that the corresponding value function and policy can capture the true MDP.
\begin{Assumption}
    There exists a perfectly identifiable function $\vect\delta\left(\vect x_k,\vect u_k\right)$ that captures the stochasticity and uncertainty (model mismatch) of the real system \eqref{eq:real_dyn}, such that
    \begin{align}\label{eq:real_del}
        \vect x_{k+1}=\vect f(\vect x_k,\vect u_k,\vect d_k)=\vect{\hat f}(\vect x_k,\vect u_k)+ \vect\delta\left(\vect x_k,\vect u_k\right),
    \end{align}
    where $\vect{\hat f}(\vect x_k,\vect u_k)$ denotes an imperfect model of \eqref{eq:real_dyn}.
\end{Assumption} 
\begin{Remark}
   A wide range of existing approaches aim to accurately identify the model mismatch $\vect\delta\left(\vect x_k,\vect u_k\right)$ in \eqref{eq:real_del} by leveraging Machine Learning (ML) algorithms and uncertainty quantification techniques. Despite their predictive capabilities, ML-based MPC models often lack an explicit connection to the underlying control objectives. In most cases, these models are trained solely to minimize prediction error, with the implicit assumption that better predictions will naturally translate into improved MPC performance, an assumption that may not hold in practice.
\end{Remark}
Under Assumption 1, we define the value function associated with a finite version of the true MDP \eqref{eq:true_V} as
\begin{align}\label{eq:V_finite_MDP}
    &V^{\vect\pi,N}\left(\vect x_k\right)=\mathbb{E}\Bigg[L\left(\vect x_0,\vect\pi\left(\vect x_0\right)\right)\\\nonumber
    &+\gamma^N T\left(\vect{\hat f}(\vect x_{N-1},\vect u_{N-1})+\vect\delta\left(\vect x_{N-1},\vect u_{N-1}\right)\right)+\\\nonumber
    &+\sum_{i=1}^{N-1}\gamma^i L\left(\vect{\hat f}(\vect x_{i-1},\vect u_{i-1})+ \vect\delta\left(\vect x_{i-1},\vect u_{i-1}\right),\vect u_i\right)\\\nonumber
    &\qquad\qquad\qquad\qquad\qquad\qquad\Big|\vect x_0=\vect x_k,\quad \vect u_i=\vect\pi\left(\vect x_i\right)\Bigg],
\end{align}
where $T$ and $L$ denote the terminal and stage cost functions, respectively.
\begin{Remark}
    In this paper, we demonstrate that the terminal and stage cost functions, $T$ and $L$ in \eqref{eq:V_finite_MDP}, can be modified without the need to identify a perfectly accurate function $\vect\delta$ in \eqref{eq:real_del}, such that the resulting value function approximates $V^\star\left(\vect x_k\right)$.
\end{Remark}
According to Remark 1, we introduce the modified terminal cost function $T_{\vect\delta}$ and the modified stage cost function $L_{\vect\delta}$, such that the corresponding finite-time value function is given by:
\begin{align}\label{eq:modif_V_finite}
    &\hat V^{\vect\pi,N}\left(\vect x_k\right)=\mathbb{E}\Bigg[\gamma^N T_{\vect\delta}\left(\vect {\hat x}_N\right)+L\left(\hat{\vect {x}}_0,\vect\pi\left(\hat{\vect {x}}_0\right)\right)\\\nonumber
    &\qquad+\sum_{i=1}^{N-1}\gamma^i L_{\vect\delta}\left(\vect {\hat x}_{i},\vect {\hat u}_{i}\right)\Big|\hat{\vect {x}}_0=\vect x_k,\quad \hat{\vect {u}}_i=\vect\pi\left(\hat{\vect {x}}_i\right)\Bigg],
\end{align}
where $\vect {\hat x}_{i+1}=\vect{\hat f}(\vect {\hat x}_{i},\vect {\hat u}_{i})$. Inspired by \cite{10808167}, we define the modified stage cost function as
\begin{align}\label{eq:modif_L}
     L_{\vect{\delta}}(\hat{\vect{x}}_i, \vect{\pi}(\hat{\vect{x}}_i)) = V^\star(\hat{\vect{x}}_i) - \gamma \mathbb{E}[V^\star(\hat{\vect{x}}_{i+1})].
\end{align}
More precisely, this choice is made to account for model mismatch and provides an effective way to modify the control objective for the following reasons:

    \begin{itemize}
        \item \textbf{Reflection of Model Mismatch:} 
            The term \(\vect\delta(\hat{\vect{x}}_i, \vect{\pi}(\hat{\vect{x}}_i))\) in the real system dynamics equation (\ref{eq:real_del}) captures the difference between the true system dynamics and the model's prediction. By using \eqref{eq:modif_L}, we are directly incorporating the impact of model mismatch on the control policy. The difference between the true system value function and the model-predicted value function is a measure of the discrepancy introduced by the mismatch, and this term helps to adjust for that.
        
        \item \textbf{Incorporating Long-Term Consequences:} 
            The term \(V^\star(\hat{\vect{x}}_i)\) represents the long-term cost starting from state \(\hat{\vect{x}}_i\), while \(\gamma \mathbb{E}[V^\star(\hat{\vect{x}}_{i+1})]\) represents the expected long-term cost starting from the predicted next state \(\hat{\vect{x}}_{i+1}\). By minimizing this difference, the control policy adapts to ensure that the impact of model mismatch is minimized over time, ensuring more accurate decision-making over the prediction horizon.
    \end{itemize}

\begin{Assumption}
    To ensure that the modified stage cost function \eqref{eq:modif_L} reflects the model mismatch effectively, we make the following assumptions:
\begin{enumerate}
\item \textbf{Model Mismatch is Bounded:} 
            The model mismatch \(\vect\delta(\hat{\vect{x}}_i, \vect{\pi}(\hat{\vect{x}}_i))\) is bounded, i.e., there exists a small upper bound \(\epsilon_{\delta}\) such that $
            \|\vect\delta(\hat{\vect{x}}_i, \vect{\pi}(\hat{\vect{x}}_i))\| \leq \epsilon_{\delta}$. This ensures that the model mismatch does not grow unboundedly, and the correction made by the stage cost function remains effective.

\item \textbf{Smooth and Well-Defined Value Function:}
            The value function \(V^\star(\hat{\vect{x}}_i)\) is assumed to be smooth and well-defined. This ensures that the correction made by \eqref{eq:modif_L} is meaningful. If the value function is discontinuous or undefined, this modification would not provide a reliable correction for the model mismatch. We then assume that there exists a non-empty set $\mathcal{X}_0 \subseteq \mathcal{X}$ such that, for all $\hat{\vect{x}} \in \mathcal{X}0$ and all $\gamma \in (0,1]$, the following holds:
\begin{align}
&\left|\gamma^{k}\mathbb{E}\left[V^\star\left(\hat{\vect{x}}_{i}\right)\right]\right| < \infty, \quad \forall\ 0 < i < N.
\end{align}

 \item \textbf{Discount Factor Close to One:}
            The discount factor \(\gamma\) should be close to 1. This is because the modification relies on considering the long-term consequences of the model mismatch. A small \(\gamma\) would discount future values too heavily, making it less effective at correcting for model mismatch over the long term.

\end{enumerate}
\end{Assumption}

\begin{Remark}
    The modified stage cost function \eqref{eq:modif_L} is a natural choice for reflecting model mismatch when the above assumptions hold. This formulation ensures that the control policy can adapt to the model mismatch in a way that minimizes its impact on long-term performance, thereby improving the overall robustness and accuracy of the system's decision-making process.
\end{Remark}

\begin{theorem}\label{theorem:V_modif}
    An Optimal Control Problem (OCP) with the associated value function \eqref{eq:modif_V_finite} can yield the optimal value function $V^\star$ associated with the true MDP, such that the following holds:
    \begin{align}
        \hat V^\star\left(\vect x_k\right)=\min_{\vect\pi}\hat V^{\vect\pi,N}\left(\vect x_k\right)=V^\star\left(\vect x_k\right),
    \end{align}
    and the corresponding optimal policy can capture the true optimal policy $\vect\pi^\star$.
\end{theorem}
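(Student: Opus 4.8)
The plan is to turn \eqref{eq:modif_V_finite} into a one‑step Bellman recursion by exploiting the telescoping structure that the modified stage cost \eqref{eq:modif_L} was engineered to produce. First I would fix the modified terminal cost to the canonical choice $T_{\vect\delta}(\cdot)=V^\star(\cdot)$, so that the last term of \eqref{eq:modif_V_finite} reads $\gamma^N\,\mathbb{E}[V^\star(\hat{\vect x}_N)]$. Next I would substitute \eqref{eq:modif_L} into the running sum: since $L_{\vect\delta}(\hat{\vect x}_i,\hat{\vect u}_i)=V^\star(\hat{\vect x}_i)-\gamma\,\mathbb{E}[V^\star(\hat{\vect x}_{i+1})]$, the discounted sum $\sum_{i=1}^{N-1}\gamma^i L_{\vect\delta}(\hat{\vect x}_i,\hat{\vect u}_i)$ collapses, after re‑indexing and using linearity of expectation, to $\gamma\,\mathbb{E}[V^\star(\hat{\vect x}_1)]-\gamma^N\,\mathbb{E}[V^\star(\hat{\vect x}_N)]$. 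Assumption~2 (item~2) is what makes this legitimate: each $\gamma^i\mathbb{E}[V^\star(\hat{\vect x}_i)]$ is finite on $\mathcal{X}_0$, so the partial sums may be rearranged, and the indicator $\mathcal{I}_\infty$ inside $L$ only restricts the admissible policies rather than destroying finiteness.

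Adding back the terminal term, the two $\gamma^N\mathbb{E}[V^\star(\hat{\vect x}_N)]$ contributions cancel and we are left with $\hat V^{\vect\pi,N}(\vect x_k)=\mathbb{E}\big[L(\vect x_k,\vect\pi(\vect x_k))+\gamma V^\star(\hat{\vect x}_1)\big]$, a problem that no longer depends on the horizon $N$ and in which only the first action $\vect\pi(\vect x_k)$ appears. Minimizing over $\vect\pi$ then yields $\hat V^\star(\vect x_k)=\min_{\vect u}\{L(\vect x_k,\vect u)+\gamma\,\mathbb{E}[V^\star(\vect x^+)]\}$ with $\vect x^+\sim\mathbb{P}[\cdot\mid\vect x_k,\vect u]$. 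By the dynamic‑programming characterization of the optimal value function behind \eqref{eq:true_V}, the right‑hand side is exactly $V^\star(\vect x_k)$, and its minimizer is $\vect\pi^\star(\vect x_k)$; repeating the argument pointwise over $\mathcal{X}_0$ shows that the greedy policy induced by the OCP coincides with $\vect\pi^\star$, which is the claim.

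The hard part will be justifying the step where the telescoping sum built on the imperfect predictions $\hat{\vect x}_{i+1}=\vect{\hat f}(\hat{\vect x}_i,\hat{\vect u}_i)$ is identified with quantities living on the true transition kernel $\mathbb{P}$ (the $\mathbb{E}[V^\star(\cdot)]$ terms and the Bellman equation). This reconciliation is precisely the purpose of \eqref{eq:modif_L}: the subtracted term $\gamma\,\mathbb{E}[V^\star(\hat{\vect x}_{i+1})]$ re‑injects the true one‑step value dynamics, so that under Assumption~1 (true dynamics $=\vect{\hat f}+\vect\delta$) the nominal‑to‑true substitution is consistent — exactly in the idealized case and, more generally, up to a residual controlled by Assumptions~2–3. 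Concretely, I would close this gap by bounding $\big|\mathbb{E}[V^\star(\vect{\hat f}(\vect x,\vect u)+\vect\delta(\vect x,\vect u))]-V^\star(\vect{\hat f}(\vect x,\vect u))\big|$ through smoothness of $V^\star$ and $\|\vect\delta\|\le\epsilon_\delta$, noting that $\gamma$ close to $1$ prevents the accumulated correction from blowing up over the horizon, with the idealized limit ($\epsilon_\delta\to 0$, perfect compensation) recovering the stated equality $\hat V^\star=V^\star$.
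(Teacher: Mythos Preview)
Your proposal follows essentially the same route as the paper: fix $T_{\vect\delta}=V^\star$, telescope the discounted sum of modified stage costs so that \eqref{eq:modif_V_finite} collapses to $L(\vect x_k,\vect\pi(\vect x_k))+\gamma\,\mathbb{E}[V^\star(\hat{\vect x}_1)]$, identify this with $Q^\star(\vect x_k,\vect\pi(\vect x_k))$, and minimize over $\vect\pi$ to recover $V^\star$ and $\vect\pi^\star$. The paper's proof is exactly this telescoping-to-$Q^\star$ argument, invoking Assumption~2 for finiteness just as you do.

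The only point of divergence is your third paragraph. The paper does not introduce an $\epsilon_\delta$-controlled approximation here; it simply writes $L(\hat{\vect x}_0,\vect\pi(\hat{\vect x}_0))+\gamma\,\mathbb{E}[V^\star(\hat{\vect x}_1)]=Q^\star(\vect x_k,\vect\pi(\vect x_k))$ as an equality, effectively reading the expectation in \eqref{eq:modif_L} as the true one-step expectation so that the nominal-to-true reconciliation is absorbed into the \emph{definition} of $L_{\vect\delta}$ rather than bounded after the fact. Your instinct that this identification deserves justification is sound, but for the purpose of matching the theorem as stated (an exact equality, not an $\epsilon_\delta$-perturbed one), the approximation argument you sketch is extraneous and in fact weakens the conclusion; you should drop it and take the identification with $Q^\star$ as holding by construction of \eqref{eq:modif_L}.
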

\begin{proof}
Let us consider the modified stage cost \eqref{eq:modif_L} and choose the terminal cost $T_{\vect{\delta}}(\hat{\vect{x}}_N) = V^\star(\hat{\vect{x}}_N)$. Now, we apply a telescoping sum to the modified value function \(\hat{V}^{\vect{\pi}, N}(\vect{x}_k)\) in \eqref{eq:modif_V_finite}. Using the modified cost functions, we can express the total cost over the horizon as
\begin{align}\label{eq:modifV}
    &\hat V^{\vect\pi,N}\left(\vect x_k\right)=\mathbb{E}\Bigg[\gamma^N V^\star(\hat{\vect{x}}_N)+L\left(\hat{\vect {x}}_0,\vect\pi\left(\hat{\vect {x}}_0\right)\right)\\\nonumber
    &\qquad+\sum_{i=1}^{N-1}\gamma^i\left(V^\star(\hat{\vect{x}}_i) - \gamma \mathbb{E}[V^\star(\hat{\vect{x}}_{i+1})]\right)\Big|\hat{\vect {x}}_0=\vect x_k\Bigg].
\end{align}
Let us define the modified term $V_m$ as
\begin{align}
    V_m=\sum_{i=1}^{N-1}\gamma^i\left(V^\star(\hat{\vect{x}}_i) - \gamma \mathbb{E}[V^\star(\hat{\vect{x}}_{i+1})]\right).
\end{align}
Under Assumption 2, this modified term then becomes a telescoping sum such that
\begin{align}\label{eq:modif_term}
    V_m=\gamma V^\star(\hat{\vect{x}}_1)-\gamma^N\mathbb{E}[V^\star(\hat{\vect{x}}_N)].
\end{align}
Substituting \eqref{eq:modif_term} in \eqref{eq:modifV}, we have that
\begin{align}
    &\hat V^{\vect\pi,N}\left(\vect x_0\right)=L\left(\hat{\vect {x}}_0,\vect\pi\left(\hat{\vect {x}}_0\right)\right)+\gamma\mathbb{E}[V^\star(\hat{\vect{x}}_{1})]\\\nonumber
&\qquad\qquad\qquad=Q^\star(\hat{\vect{x}}_{0},,\vect\pi\left(\hat{\vect {x}}_0\right))=Q^\star({\vect{x}}_{k},\vect\pi\left({\vect {x}}_k\right)).
\end{align}
It implies that
\begin{align}
    \hat V^\star\left(\vect x_k\right)&=\min_{\vect\pi}\hat V^{\vect\pi,N}\left(\vect x_k\right)\\\nonumber
    &=\min_{\vect\pi}Q^\star({\vect{x}}_{k},\vect\pi\left({\vect {x}}_k\right))=V^\star\left(\vect x_k\right),
\end{align}
and ${\vect{\pi}}^\star\left(\vect x_k\right)=\arg\min_{{\vect{\pi}}} \hat V^\star\left(\vect x_k\right)$.
\end{proof}
It is important to note that the central Theorem \ref{theorem:V_modif} establishes the existence of a cost modification and offers insight into its structure. However, this modification is not directly implementable in practice, as the modified cost function is not known a priori. To overcome this challenge, we propose leveraging Reinforcement Learning (RL) to adjust a parametric version of the MPC cost functions, parameterized by $\vect\theta$, as detailed in the remainder of the paper.
\subsection{Compatible DPG}
We define the closed-loop performance of a parameterized policy $\vect\pi_{\vect\theta}$ for a given stage cost $L\left(\vect x_k,\vect u_k\right)$ as the following total expected cost, a.k.a policy performance index.
\begin{align}\label{eq:J}
J\left(\vect\pi_{\vect\theta}\right)=\mathbb{E}\Bigg[\sum_{k=0}^{\infty}\gamma^kL\left(\vect x_k,\vect u_k\right)\Bigg|\vect u_k=\vect \pi_{\vect\theta}(\vect x_k)\Bigg],
\end{align}
where the expectation $\mathbb{E}$ is taken over the distribution of the Markov chain in the closed-loop system under policy $\vect\pi_{\vect\theta}$. The policy parameters $\vect\theta$ can be directly optimized using gradient descent to minimize the expected closed-loop cost achieved by executing the policy $\vect\pi_{\vect\theta}$. 
\begin{align}
\label{eq:theta}
    \vect\theta \leftarrow \vect\theta-\alpha  \nabla _{\vect\theta}J(\vect\pi _{\vect\theta}),
\end{align}
where $\alpha>0$ is the learning rate. The policy gradient then reads as
\begin{align}\label{eq:PG_Q}
    \nabla_{\vect\theta} J\left(\vect\pi_{\vect\theta}\right)=\mathbb{E}\left[\nabla_{\vect\theta}\vect\pi_{\vect\theta}\left(\vect x_k\right)\nabla_{\vect u}Q^{\vect\pi_{\vect\theta}}\left(\vect x_k,\vect u_k\right)\Big|_{\vect u_k=\vect\pi_{\vect\theta}\left(\vect x_k\right)}\right],
\end{align}
where the expectation is taken over the trajectories of the real system subject to policy ${\vect{ \pi}_{\vect\theta}}$. Note that $\nabla_{\vect u}Q^{\vect\pi_{\vect\theta}}\left(\vect x_k,\vect u_k\right)$ can be replaced by $\nabla_{\vect u}A^{\vect\pi_{\vect\theta}}\left(\vect x_k,\vect u_k\right)$, where $A^{\vect\pi_{\vect\theta}}\left(\vect x_k,\vect u_k\right)=Q^{\vect\pi_{\vect\theta}}\left(\vect x_k,\vect u_k\right)-V^{\vect\pi_{\vect\theta}}\left(\vect x_k\right)$ denotes the advantage function. A necessary condition of optimality to $\vect\pi_{\vect\theta}$ then reads as
\begin{align}\label{eq:necessary_condition}
    \nabla_{\vect\theta} J\left(\vect\pi_{\vect\theta}\right)=0.
\end{align}
In the context of CDPG, one can use a \textit{compatible} approximation of the action-value function $Q^{\vect\pi _{\vect\theta}} (\vect x_k,\vect u_k)$ for which there is a class of compatible function approximators $Q^{\vect w}(\vect x_k,\vect u_k)$ such that the policy gradient is preserved. The compatible state-action function then reads as
\begin{align}
\label{eq:Q_w}
&Q^{\vect w}(\vect x_k,\vect u_k)=\underbrace{{{\left( {\vect u_k - {\vect\pi _{\vect\theta}\left(\vect x_k\right) }} \right)}^{\top}}\nabla_{\vect\theta}\vect\pi_{\vect\theta}^\top\left(\vect x_k\right){\vect w}}_{A^{\vect w}} + { V^{\vect\nu}}\left( \vect x_k\right),
\end{align}
where $A^{\vect w}$ denotes the advantage function parameterized by $\vect w$, approximating $A^{\vect w}\approx A^{\vect\pi_{\vect\theta}}$. The second term is a baseline function parameterized by $\vect\nu$, providing an approximation of the value function $V^{\vect\nu}\approx V^{\vect\pi_{\vect\theta}}$. Both functions can be computed by the linear function approximators as
\begin{align}
       &{V^{\vect\nu}} \left(\vect x_k\right ) =\vect\Upsilon\left(\vect x_k \right)^\top {\vect \nu},\quad {A^{\vect w}} \left(\vect x_k,\vect u_k\right ) =\vect \Psi\left(\vect x_k,\vect u_k \right)^\top {\vect w}\label{eq:adv},
\end{align}
where $\vect\Upsilon\left(\vect x_k\right)$ is the feature vector, and $\vect \Psi_k:={{\left( {\vect u_k - {\vect\pi_{\vect\theta}\left(\vect x_k\right) }} \right)}^{\top}}\nabla_{\vect\theta}\vect\pi_{\vect\theta}^\top\left(\vect x_k\right)$ includes the state-action features. The parameters $\vect w$ and $\vect\nu$ of the action-value function approximation then become the solutions of the following Least Squares (LS) problem
\begin{align}
\label{eq:error}
    \min_{\vect w, \vect\nu} \mathbb{E} \left[\big( Q^{\vect\pi_{\vect\theta}}(\vect x_k,\vect u_k)-Q^{\vect w} (\vect x_k,\vect u_k)\big )^2\right].
\end{align}
The problem above can be solved via the Least Square Temporal Difference (LSTD) method, which belongs to \textit{batch methods}, seeking to find the best fitting value function and action-value function, and it is more sample efficient than other methods.
\subsection{MPC-based CDPG}
Let us formulate a parameterized MPC scheme as follows:
	\begin{subequations}\label{eq:V0}
		\begin{align}
			V_{{\vect\theta} }(\vect x_k)=&\min_{\hat {\vect x},\hat {\vect u},\vect \eta }\quad \gamma^{N}\left(V^f_{\vect\theta}(\hat {\vect x}_N)+\vect \Gamma_{f}^\top\vect\eta_{N}\right) \nonumber\\
			&\quad +\sum_{i=0}^{N-1}\gamma^{i}\left(l_{\vect\theta}(\hat {\vect x}_i,\hat {\vect u}_i)+\vect \Gamma^\top\vect \eta_{i}\right)\label{eq:cost_mpc}\\
			\mathrm{s.t.}
                &\quad \hat {\vect x}_{i+1}=\vect {\hat f}_{\vect\theta}(\hat {\vect x}_i,\hat {\vect u}_i),\label{eq:mpc_model}\\
			&\quad\hat {\vect x}_0=\vect x_k, \label{eq:v2}\\
			&\quad \vect g(\hat {\vect u}_{i})\leq 0,\\
			&\quad \vect h_{\vect\theta}(\hat {\vect x}_{i},\hat {\vect u}_i)\leq \vect \eta_{i},\quad \vect h_{\vect\theta}^{f}(\hat {\vect x}_{N})\leq \vect \eta_{N}, \label{eq:violation} \\
			&\quad \vect\eta_{0,\ldots,N} \geq 0, \label{eq:slack}
		\end{align}
	\end{subequations}
where the parametric functions $l_{\vect\theta},V^f_{\vect\theta},\vect {\hat f}_{\vect\theta},\vect h_{\vect\theta},\vect h_{\vect\theta}^{f}$ are the stage cost, the terminal cost, the MPC model, the mixed constraints and the terminal constraints, respectively. $\vect g$ denotes the pure input constraints. In many real processes, there are uncertainties and disturbances that may cause an MPC scheme to become infeasible. Therefore, an $\ell_1$ relaxation of the mixed constraints \eqref{eq:violation} is introduced. An exact penalty is then imposed on the corresponding slack variables $\vect \eta_i$ with large enough weights $\vect \Gamma,\vect \Gamma_f$ such that the trajectories predicted by the MPC scheme will respect the constraints. All elements in the above MPC scheme are parameterized by $\vect\theta$, which will be adjusted by RL. The policy at the physical current time $k$ then reads as
	\begin{gather}\label{eq:Policy}
		\vect\pi_{\vect\theta}(\vect x_k)=\hat {\vect u}_0^\star\left(\vect x_k,\vect\theta\right),
	\end{gather}
where $\hat {\vect u}_0^{\star}$ is the first element of the input sequence $\hat {\vect u}_0^{\star},\cdots,\hat {\vect u}_{N-1}^{\star}$ solution of \eqref{eq:V0}. We next consider this optimal policy delivered by the MPC scheme as an action $\vect a$ in the context of reinforcement learning, where it is selected according to the above policy with the possible addition of exploratory moves. In an MPC-based DPG algorithm, two terms are needed, including a compatible action value function $Q^{\vect w}(\vect x_k,\vect u_k)$ and a policy gradient $\nabla_{\vect\theta} J\left(\vect\pi_{\vect\theta}\right)$. These terms are constructed based on the policy sensitivity term $\nabla_{\vect\theta}\vect\pi_{\vect\theta}\left(\vect x_k\right)$, which can be calculated using a sensitivity analysis on the MPC scheme. To this end, let us define the primal-dual Karush Kuhn Tucker (KKT) conditions underlying the MPC scheme \eqref{eq:V0} as 
\begin{align}
    \vect R = {\left[ {\begin{array}{*{20}{c}}
{{\nabla _{\vect \zeta}}{\mathcal L_{\vect\theta} }},{{\vect G_{\vect\theta} }},{\mathrm{diag}\left(\vect\mu\right) \vect H_{\vect\theta} }
\end{array}} \right]^\top},
\end{align}
where $\vect\zeta_k=\{\vect x,\vect u,\vect\eta\}$ includes the primal decision variables of \eqref{eq:V0} and the term $\mathcal{L}_{\vect \theta}$ is the associated Lagrange function as follows:
\begin{align}
\mathcal{L}_{\vect \theta}(\vect y_k) = \Phi_{\vect \theta} + \vect\lambda^\top \vect G_{\vect\theta}  + \vect\mu^\top \vect H_{\vect \theta},
\end{align}
where $\Phi_{\vect\theta}$ is the total MPC cost \eqref{eq:cost_mpc}, $\vect G_{\vect\theta}$ gathers the equality constraints and $\vect H_{\vect\theta}$ collects the inequality constraints of the MPC \eqref{eq:V0}. Let $\vect\lambda,\vect\mu$ be the associated dual variables. Argument ${\vect y_k}$ reads as ${\vect y_k} =\{\vect\zeta,\vect\lambda,\vect\mu\}$ and $ {\vect y}_k^{\star}$ refers to the solution of the MPC \eqref{eq:V0}. Consequently, the policy sensitivity ${\nabla _{\vect \theta} }{\vect \pi _{\vect \theta} }$ can then be obtained as \cite{10644368}
\begin{align}\label{eq:nabla_pi}
{\nabla _{\vect \theta} }{\vect \pi _{\vect \theta} }\left(\vect  x_k \right) =  - {\nabla _{\vect\theta} }{\vect R }\left( {\vect y_k^\star},\vect x_k,\vect\theta\right){\nabla _{\vect y_k}}{\vect R }{\left( {\vect y^\star_k},\vect x_k,\vect\theta \right)^{ - 1}}\frac{\partial {\vect y_k}}{\partial {\vect u_k}}.
\end{align}
\section{Background on Bayesian Optimization}\label{sec:3}
Bayesian Optimization (BO) is an efficient approach for optimizing complex, expensive-to-obtain, and noisy black-box functions,
\begin{align}
   \vect{\theta}^\star = \arg \min_{\vect{\theta}\in\mathcal{P}} J(\vect{\theta}), 
\end{align}
where $\vect\theta^\star$ is the global minimizer of $J(\vect{\theta})$ on the parameter space $\mathcal{P}$. In control applications, it is particularly useful for learning parameters in controller such as MPC.
\subsection{Gaussian Process}
The BO methods use Gaussian Process (GP) surrogate models to learn an approximation of not-explicitly-known functions with respect to some parameters, e.g., the closed-loop performance of a parameterized MPC $J\left(\vect\theta\right):=J\left(\vect\pi_{\vect\theta}\right)$ with respect to the policy parameters $\vect\theta$. Considering \( J: \mathbb{R}^{n_{\vect\theta}} \rightarrow \mathbb{R} \), we then define the corresponding GP model as
\begin{align}\label{eq:GP}
  J(\vect\theta) \sim \mathcal{GP}(\mu(\vect\theta), k(\vect\theta, \vect\theta')),
\end{align}
where $\mu(\vect{\theta}) = \mathbb{E}[J(\vect{\theta})]$ is the mean function, typically assumed to be zero, and the kernel $k(\vect \theta, \vect \theta')$ is the covariance function, determining the relationship between points in the input space. More precisely, the covariance function describes how the values of the function relate to each other. A commonly used kernel is the squared exponential function as
\begin{align}
    k(\vect{\theta}, \vect{\theta'}) = \sigma_J^2 \exp\left(-\frac{1}{2l^2} ||\vect{\theta} - \vect{\theta'}||^2\right),
\end{align}
where $\sigma_J^2$ is the signal variance, and $l$ is the length scale. The choice of kernel significantly impacts the behavior of the GP model. Given observed data \(\mathcal{D}^k = \{(\vect{\theta}^i, y^i)\}_{i=1}^k,\quad y^i = J(\vect{\theta}^i) + \epsilon^i\) and \(\epsilon^i \sim \mathcal{N}(0, \sigma_n^2)\), we then have
\[
\mathbf{y} \sim \mathcal{N}(\mu(\vect{\Theta}), K + \sigma_n^2 I),
\]
where $\vect\Theta=\left\{\vect\theta^i\right\}_{i=1}^k$ and $K$ denotes the covariance matrix ($n\times n$ kernel matrix with elements $[K]_{(i,j)}=k(\vect\theta_i,\vect\theta_j)$) computed over the training inputs. The posterior distribution of \(J(\vect{\theta}^\star)\) at a new input \(\vect{\theta}^\star\) is then given by:
\begin{align}\label{eq:post}
\begin{bmatrix}
\mathbf{y} \\
J^\star 
\end{bmatrix}
\sim \mathcal{N}\left(\begin{bmatrix}
\mu(\vect{\Theta}) \\
\mu(\vect{\theta}^\star)
\end{bmatrix}, \begin{bmatrix}
K + \sigma_n^2 I & k(\vect{\theta}^\star, \vect{\Theta}) \\
k(\vect{\Theta}, \vect{\theta}^\star) & k(\vect{\theta}^\star, \vect{\theta}^\star)
\end{bmatrix}\right).
\end{align}
This allows us to derive the predictive mean and variance for $J(\vect{\theta}^\star)$:
\begin{align}\label{eq:mean_var}
&\mu(\vect{\theta}^\star) = k(\vect{\theta}^\star, \vect{\Theta})(K + \sigma_n^2 I)^{-1} \mathbf{y},\\\nonumber
&\sigma^2(\vect{\theta}^\star) = k(\vect{\theta}^\star, \vect{\theta}^\star) - k(\vect{\theta}^\star, \vect{\Theta})(K + \sigma_n^2 I)^{-1}k(\vect{\Theta}, \vect{\theta}^\star).
\end{align}
The updated GP models are then used to induce an acquisition function $\alpha(\vect{\theta})$, which helps find the optimum of the unknown objective function $J(\vect\theta)$. 
\begin{align}
   \vect{\theta}^\star = \arg \min_{\vect{\theta}\in\mathcal{P}} \alpha(\vect{\theta}).
\end{align}
The core idea behind the acquisition function $\alpha(\vect{\theta})$ is to leverage the uncertainty information provided by the probabilistic surrogate of the objective function(s) to systematically balance exploration and exploitation of the design space, enabling the identification of an \textit{optimal} combination of decision variables within a limited number of evaluations. In this paper, we use the Expected Improvement (EI) as an acquisition function for a single-objective BO.
\begin{align}\label{eq:EI}
    \alpha_{\text{EI}}(\vect{\theta}) = \mathbb{E}[\min(J(\vect{\theta})-J^\star, 0)],
\end{align}
where $J^\star$ is the best observed function value. This acquisition function can be computed using the properties of the Gaussian distribution as
\begin{align}\label{eq:acq_EI}
  \alpha_{\text{EI}}(\vect{\theta}) =-\left( \left(J^\star - \mu(\vect{\theta})\right) \Phi(Z) + \sigma(\vect{\theta}) \phi(Z)\right),  
\end{align}
where $Z = \frac{J^\star - \mu(\vect{\theta})}{\sigma(\vect{\theta})}$. $\Phi(Z)$ and $\phi(Z)$ denote the standard normal Cumulative Density Function (CDF) and Probability Density Function (PDF) of the standard normal distribution, respectively.
\subsection{Multi-Objective BO}
In multi-objective optimization, we aim to minimize multiple objectives:
\begin{align}\label{eq:objs}
   \vect{f}(\vect{\theta}) = [f_1(\vect{\theta}), f_2(\vect{\theta}), \ldots, f_M(\vect{\theta})], 
\end{align}
where $\vect{f}(\vect{\theta}) \in \mathbb{R}^M,M\geq 2$ represents the performance space. Objectives are often conflicting, resulting in a set of optimal solutions, rather than a single best solution. These optimal solutions are referred to as \textit{Pareto set} $P_s\subseteq\mathcal{P}$ in the parameter space, and the corresponding images in performance space are \textit{Pareto front} $\tilde P=\vect f(P_s)\subset\mathbb{R}^M$. 
\begin{Definition}
A solution $\vect\theta^\star\in P_s$ is considered \textit{Pareto-optimal} if there is no other point $\vect\theta\in\mathcal{P}$ such that 
\begin{align}
   &f_i(\vect{\theta}^\star) \leq f_i(\vect{\theta}) \quad \forall i \in \{1, \ldots, M\} \quad \text{and}\\\nonumber
   &f_j(\vect{\theta}^\star) < f_j(\vect{\theta})\quad\exists j \in \{1, \ldots, M\} .
\end{align}
\end{Definition}
The set of non-dominated solutions forms the \textit{Pareto front}, which offers trade-offs between the objectives, allowing for decision-making based on specific preferences. To measure the quality of an approximated \textit{Pareto front}, hypervolume indicator is the most commonly used metric in multi-objective optimization \cite{7360024}. Let $\tilde P$ be a Pareto front approximation in an $M$-dimensional performance space and given a reference point $r\in\mathbb{R}^M$, the hypervolume $\mathcal{V}(\tilde P)$ is defined as
\begin{align}
    \mathcal{V}(\tilde{P}) = \int_{\mathbb{R}^M} \mathds{1}_{\mathcal{V}(\tilde{P})}(z)dz.
\end{align}
The hypervolume then becomes the following set:
\begin{align}
   \mathcal{V}(\tilde{P}) =\left\{z\in Z\big|\exists 1\leq i\leq|\tilde P|:r\preceq z\preceq\tilde P(i)\right\},  
\end{align}
where $\tilde P(i)$ is the $i$-th solution in $\tilde P$, $\preceq$ is the relation operator of objective dominance, and $\mathds{1}_{\mathcal{V}(\tilde{P})}$ is a Dirac delta function that equals $1$ if $z\in\mathcal{V}(\tilde{P})$ and $0$ otherwise. To determine how much the hypervolume would increase if a set of new points $P\subset\mathbb{R}^M$ is added to the current $\tilde P$, one can use the Hypervolume Improvement (HVI) as
\begin{align}
    \text{HVI}(P,\tilde P)=\mathcal{V}(\tilde{P}\cup P)-\mathcal{V}(\tilde{P}).
\end{align}
In the context of Multi-Objective BO (MOBO), the Expected Hypervolume Improvement (EHVI) acquisition function then aims to maximize the expected increase in hypervolume as
\begin{align}
    \alpha_{\text{EHVI}}(\vect{\theta}) = \mathbb{E}[\mathcal{V}(\tilde{P}\cup P)-\mathcal{V}(\tilde{P})],
\end{align}
where each objective function in \eqref{eq:objs} is approximated by a surrogate GP model, as explained in \eqref{eq:GP}-\eqref{eq:mean_var}. 
\section{MPC-based CDPG using Multi-Objective Bayesian Optimization}\label{sec:4}
In this paper, we propose a fusion of CDPG and MOBO, leveraging the strengths of both methods to provide an effective and data-efficient learning mechanism for a differentiable parametric MPC scheme \eqref{eq:V0} with an inaccurate model. To this end, we propose to simultaneously minimize both the closed-loop performance $J\left(\vect\theta\right)$ and its gradient $ \nabla_{\vect\theta} J\left(\vect\theta\right)$, thereby achieving the necessary condition of optimality \eqref{eq:necessary_condition}. Consequently, it is necessary to evaluate these terms at every time instant to fit the acquired data to their surrogate GP models. An overview of the proposed MPC-CDPG-MOBO scheme is shown in Fig. \ref{fig0}.
\begin{figure}[htbp!]
		\centering
		\includegraphics[width=1\linewidth]{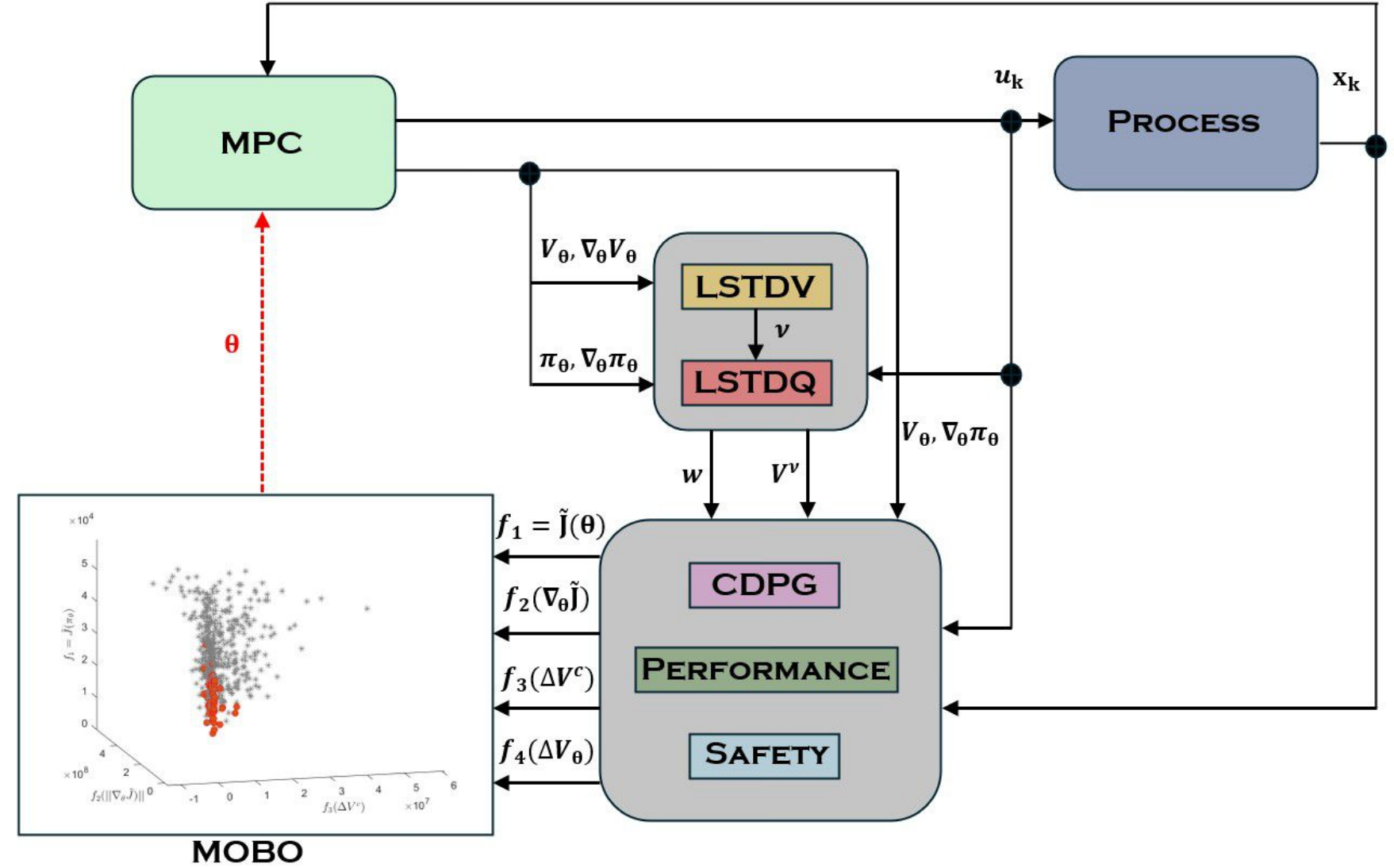}
		\caption{An overview of the proposed learning-based MPC using the fusion of reinforcement learning and multi-objective Bayesian optimization.} 
		\label{fig0}
\end{figure}
\begin{Remark}
    The proposed MPC-CDPG-MOBO framework is specifically designed to take advantage of the complete closed-loop cost information and the corresponding gradient data accumulated throughout the reinforcement learning process in every episode. This enables the method to function as a robust hybrid strategy, uniting the exploratory strength of BO (global search capability) with the fine-tuned local refinement offered by the CDPG algorithm.
\end{Remark}
To employ the closed-loop performance index $J\left(\vect\theta\right)$ as an objective function in the context of MOBO, we practically introduce a finite version of \eqref{eq:J} as 
\begin{align}\label{eq:cl_perf}
    f_1(\vect{\theta}):=\tilde J\left(\vect\theta\right)=\frac{1}{T_f}\sum_{k=0}^{T_f}\gamma^kL\left(\vect x_k,\vect \pi_{\vect\theta}(\vect x_k)\right),
\end{align}
where $\vect\pi_{\vect\theta}$ is delivered by the MPC scheme \eqref{eq:V0}. To address the necessary condition of optimality, the second objective function is defined as 
\begin{align}
    f_2(\vect{\theta})=\frac{1}{T_f}\sum_{k=0}^{T_f}\left\| \nabla _{\vect\theta}\tilde J_k(\vect\theta)\right\|,
\end{align}
where $T_f$ is the final time instant at the end of each episode, and considering \eqref{eq:PG_Q} and \eqref{eq:Q_w}, we have that
\begin{align}\label{eq:DPG}
   \nabla _{\vect\theta}\tilde J_k(\vect\theta)=\nabla_{\vect\theta}\vect\pi_{\vect\theta}\left(\vect x_k\right)\nabla_{\vect\theta}\vect\pi_{\vect\theta}^\top\left(\vect x_k\right) \vect w. 
\end{align}
Given the critic parameters $\vect w$, one then needs to compute $\nabla_{\vect\theta}\vect\pi_{\vect\theta}$ by performing a sensitivity analysis in \eqref{eq:nabla_pi}. As observed, evaluating \eqref{eq:DPG} depends on both sets of critic parameters, $\vect w$ and $\vect\nu$, delivered by the LS problem \eqref{eq:error}. To provide useful features and exploit prior knowledge for approximating the value function $V^{\vect\nu} \approx V^{\vect\pi_{\vect\theta}}$ required in \eqref{eq:Q_w}, we are inspired by \cite{10178119} and leverage the same MPC scheme \eqref{eq:V0}, used as an approximator of $\vect\pi_{\vect\theta}$, to build an approximation of $V^{\vect\pi_{\vect\theta}}$ as
\begin{align}\label{eq:nu}
    V^{\vect\nu}(\vect x_k)=V_{{\vect\theta} }(\vect x_k)+\nabla_{\vect\theta}V_{{\vect\theta} }(\vect x_k)^\top\vect\nu,
\end{align}
where the sensitivity of the parametric value function delivered by \eqref{eq:V0} is obtained as $\nabla_{\vect\theta}V_{{\vect\theta} }=\nabla_{\vect\theta} \mathcal{L}_{\vect \theta}(\vect y_k)$. To solve the LS problem \eqref{eq:error}, we use the LSTD method. To this end, an LSTDQ problem can be formulated so that \eqref{eq:error} becomes equivalent to:
\begin{align}\label{eq::LSTDQ_eq}
    &\mathbb{E}\left[\delta_Q\nabla_{\vect w} Q^{\vect w}(\vect x_k,\vect u_k)\right]=0,\\\nonumber
    &\delta_Q=\left(L\left(\vect x_k,\vect u_k\right)\right)+\gamma Q^{\vect w}\left({\vect x_{k+1},\vect\pi_{\vect\theta} (\vect x_{k+1})} \right)\\\nonumber
    &\qquad\qquad\qquad\qquad\qquad\qquad-Q^{\vect w}\left({\vect x_k,\vect u_k}\right).
\end{align}
Substituting the state-action value function $Q^{\vect w}=A^{\vect w}+V^{\vect\nu}$ into \eqref{eq::LSTDQ_eq}, we then have that
\begin{align}
    &\mathbb{E}\left[\delta\nabla_{\vect w} A^{\vect w}\left({\vect x^k,\vect u^k} \right)\right]=0,\\\nonumber
    &\delta=L\left(\vect x^k,\vect u_i^k\right)+\gamma \left(A^{\vect w}\left({\vect x_{k+1},\vect\pi_{\theta}(\vect x_{k+1})} \right)+V^{\vect\nu}\left({\vect x_{k+1}} \right)\right)\\\nonumber
    &\qquad\qquad\qquad\qquad\qquad\qquad-\left(A^{\vect w}\left({\vect x_k,\vect u_k}\right)+V^{\vect\nu}\left({\vect x_{k}} \right)\right),
\end{align}
where the advantage function $A^{\vect w}$ is obtained using a linear approximator in with the state-action features $\vect\Psi_k={{\left( {\vect u_k - {\vect\pi_{\vect\theta}\left(\vect x_k\right) }} \right)}^{\top}}\nabla_{\vect\theta}\vect\pi_{\vect\theta}^\top\left(\vect x_k\right)$. Since $A^{\vect w}\left({\vect x_{k+1},\vect\pi_{\theta}(\vect x_{k+1})} \right)=0$, the update rule of $\vect w$ is then obtained as
\begin{subequations}
\label{eq:lstdq_update}
\begin{align}
    &\vect w=\Xi_{\vect w}^{-1}b_{\vect w},\quad \Xi_{\vect w}= \sum_{k=1}^{T_f}\vect \Psi_k\vect \Psi_k^{\top},\\
    &b_{\vect w}=\sum_{k=1}^{T_f}\delta^V_k\vect\Psi_k,
\end{align}
\end{subequations}
where
\begin{align}
    \delta^V_k=L\left(\vect x_k,\vect u_k\right)+\gamma V_{\vect\theta}\left({\vect x_{k+1}}\right)-V^{\vect\nu}\left({\vect x_k}\right).
\end{align}
To obtain the update rule of $\vect\nu$ associated with $V^{\vect\nu}$ in \eqref{eq:nu}, we use an LSTDV learning method such that the LS problem \eqref{eq:error} at $\vect u_k=\vect\pi_{\theta}(\vect x_k)$ becomes equivalent to:
\begin{align}\label{eq:LSTDV}
    \mathbb{E}\left[\delta^V_k\nabla_{\vect\nu} V^{\vect\nu}\left({\vect x_k} \right)\right]=0,
\end{align}
The corresponding update rule then reads as
\begin{subequations}
\label{eq:lstdv_update}
\begin{align}
    &\vect \nu=\Xi_{\vect \nu}^{-1}b_{\vect \nu},\\
    &\Xi_{\vect \nu}=\sum_{k=1}^{T_f}\nabla_{\vect\theta}V_{{\vect\theta} }(\vect x_k)\nabla_{\vect\theta}V_{{\vect\theta} }(\vect x_k)^\top,\\
    &b_{\vect \nu}=\sum_{k=1}^{T_f}{\nabla_{\vect\theta}V_{{\vect\theta} }(\vect x_k)\left(L(\vect x_k,\vect u_k)+\gamma V_{{\vect\theta} }(\vect x_{k+1})-V_{{\vect\theta} }(\vect x_k)\right)}.
\end{align}
\end{subequations}
As observed, the updating rules of the critic parameters $\vect w,\vect\nu$ are influenced by the policy parameters $\vect\theta$ through $V_{\vect\theta}$ and $\nabla_{\vect\theta} V_{\vect\theta}$. Therefore, some policy parameters may cause a distortion in the updating rules of the critic parameters. To address this issue, we introduce the third objective function $f_3$ to be minimized using the proposed MOBO method, which forces the policy parameters to avoid an increase in the average value of $V^{\vect\nu}$ in \eqref{eq:nu} across two consecutive learning episodes, $\ell=1,\cdots,m$. 
\begin{align}
    f_3(\vect\theta)=\max\left(0,V^c_{\ell+1}-V^c_\ell\right),
\end{align}
where $V_\ell^c=\sum_{k=1}^{T_f}  V^{\vect\nu}(\vect x_k)$. Ensuring stability is a fundamental requirement for any control system. However, when employing BO to learn the parameters of an MPC scheme, obtaining formal stability guarantees is not straightforward. To address this issue, one can integrate Lyapunov stability criteria under the control policy captured from \eqref{eq:V0} directly into the learning process \cite{HIRT2024208}. To this end,  we propose building a GP model of the forth objective function $f_4$ to be embedded in the proposed MOBO-based learning mechanism.  
\begin{align}
    f_4(\vect\theta)=\sum_{k=1}^{T_f}\beta\max\left(0,V_{\vect\theta}(\vect x_{k+1})-V_{\vect\theta}(\vect x_{k})\right),\quad\beta>0.
\end{align}
\begin{Remark}
   It is worth noting that the proposed MPC-CDPG-MOBO algorithm introduces a safe Multi-Objective Reinforcement Learning (MORL) framework, wherein both the critic and actor networks are guided by the MOBO mechanism. The corresponding parameters $\vect\theta$ are updated based on multiple objective functions, $f_1(\vect\theta), \ldots, f_4(\vect\theta)$. Specifically, the method employs an MPC scheme as a shared approximator for both the critic and actor networks, thereby enabling the MOBO algorithm to update them simultaneously.
\end{Remark}

\section{Simulation Results and Discussions}\label{sec:5}
In this section, the proposed learning-based MPC scheme is applied to a Continuous Stirred Tank Reactor (CSTR), where the system dynamics are nonlinear and may not be modeled accurately. 
\subsection{CSTR Model}
In this chemical reactor, the reaction $(A\rightarrow B)$ occurs via an irreversible and exothermic process. The objective is to control the concentration of species $A$, denoted by $C_a$, and the reaction volume $V$, by manipulating the output process flow rate $q_s$ and the coolant flow rate $q_c$ \cite{PIPINO2021107195}. 
The dynamics of the CSTR are then described as 
\begin{subequations}\label{eq:CSTR_model}
		\begin{align}
         &\dot{V}(t)=q_o-q_s(t),\\
         &\dot{C}_a(t)=\frac{q_o}{V(t)}\left(C_{a_o}-C_a(t)\right)-k_0{\rm e}^{\frac{-E}{RT(t)}}C_a(t),\\
         &\dot{T}(t)=\frac{q_o}{V(t)}\left(T_o-T(t)\right)+k_1{\rm e}^{\frac{-E}{RT(t)}}C_a(t)\\\nonumber
         &\qquad\qquad\qquad\qquad+k_2\frac{q_c(t)}{V(t)}\left(1-{\rm e}^{\frac{-k_3}{q_c(t)}}\right)\left(T_{co}-T(t)\right),\\
         &k_1=\frac{-\Delta Hk_0}{\rho C_p},\quad k_2=\frac{\rho_c C_{pc}}{\rho C_p},\quad k_3=\frac{hA}{\rho_c C_{pc}} 
		\end{align}
\end{subequations}
where $V(t),C_a(t),T(t)$ are the reaction volume, the concentration of $A$, and the reactor temperature, respectively. 
\begin{figure}[htbp!]
		\centering
		\includegraphics[width=.8\linewidth]{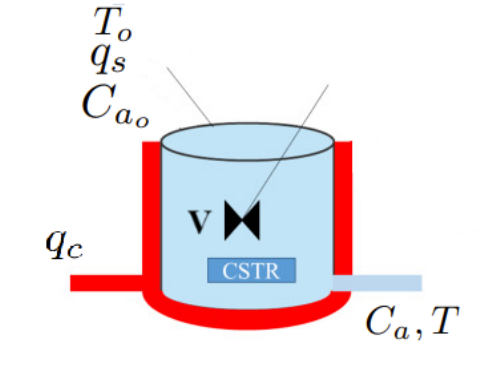}
		\caption{A basic schematic of the Continuous Stirred Tank Reactor (CSTR).} 
\end{figure}
The constant parameters listed in Table 1 include the process flow rate $q_o$, feed concentration $C_{a_o}$, reaction rate $k_0$, activation energy term $E/R$, feed temperature $T_o$, inlet coolant temperature $T_{co}$, heat of reaction $\Delta H$, heat transfer coefficient $hA$, liquid densities $\rho,\rho_c$ and specific heats $C_p, C_{pc}$.
\begin{table}[!ht]
    \centering
    \caption{CSTR Model Parameters}
    \begin{tabular}{|l|l|l|l|}
    \hline
        $q_o$ & $100[\frac{l}{min}]$ &$C_{a_o}$ & $1[\frac{mol}{l}]$ \\ \hline
        $T_o$ & $350[K]$ & $T_{co}$ & $350[K]$ \\ \hline
        $\Delta H$ & $-2\times 10^5 [\frac{cal}{mol}]$ & $\rho C_p$ & $1000[\frac{cal}{lK}]$ \\ \hline
        $k_0$ & $7.2\times10^{10}[\frac{1}{min}]$ & $E/R$ & $1\times 10^4[K]$ \\ \hline
        $\rho_cC_{pc}$ & $1000[\frac{cal}{lK}]$ & $hA$ & $7\times 10^5[\frac{cal}{minK}]$ \\ \hline
    \end{tabular}
\end{table}
\subsection{Simulation Settings}
To investigate the performance of the proposed learning mechanism in the presence of model misspecification, we use an imperfect model of \eqref{eq:CSTR_model} in the MPC scheme as
\begin{subequations}\label{eq:CSTR_Wrongmodel}
		\begin{align}
         &\dot{V}(t)=q_o-q_s(t),\\
         &\dot{C}_a(t)=1.1\frac{q_o}{V(t)}\left(C_{a_o}-C_a(t)\right)-1.2k_0{\rm e}^{\frac{-E}{RT(t)}}C_a(t),\\
         &\dot{T}(t)=\frac{q_o}{V(t)}\left(T_o-T(t)\right)+1.15k_1{\rm e}^{\frac{-E}{RT(t)}}C_a(t)\\\nonumber
         &\qquad\qquad\qquad+0.9k_2\frac{q_c(t)}{V(t)}\left(1-{\rm e}^{\frac{-1.2k_3}{q_c(t)}}\right)\left(T_{co}-T(t)\right)
		\end{align}
\end{subequations}
The constraints on the states and control inputs are $90\leq V\leq 110$, $0\leq C_a\leq 0.35$, $400\leq T\leq 480$, $55\leq q_s\leq 140$ and $55\leq q_c\leq 140$.
The parameterized MPC cost function in \eqref{eq:V0} is defined as
\begin{align}
    &\theta_c+\gamma^N\Big[(\hat{\vect x}_N-\vect x_r)^\top T_{\vect\theta}(\hat{\vect x}_N-\vect x_r)+\vect \Gamma_f^\top\vect\eta_N\Big]\\\nonumber
    &+\sum_{i=0}^N\mathcal{G}_{\vect\theta}^\top[\hat{\vect x}_i,\hat{\vect u}_i]^\top+\sum_{i=0}^{N-1}\gamma^i\Big[(\hat{\vect x}_i-\vect x_r)^\top Q_{\vect\theta}(\hat{\vect x}_i-\vect x_r)\\\nonumber
    &\qquad\qquad+(\hat{\vect u}_i-\vect u_r)^\top R_{\vect\theta}(\hat{\vect u}_i-\vect u_r)+\vect \Gamma^\top\vect\eta_i\Big],
\end{align}
where $N$ is set to $10$. We choose $\vect\Gamma_f=\vect\Gamma=[10^5,10^5,10^5]^\top$. To discretize the continuous CSTR model, a fourth-order Runge-Kutta (RK4) integrator is used with a sampling time of $0.05 \text{min}$. In the learning setting, the parameterized MPC is updated over $600$ episodes, each consisting of $3\text{min}$ ($60$ time steps) of interaction with the real system. The initial conditions are randomly selected for each episode. The reference points $\vect x_r,\vect u_r$ are defined as $V^d=105$ $l$, $C_a^d=0.12$ $mol/l$, $T^d=433$ $K$, $q_s^d=100$ $l/min$ and $q_c^d=110$ $l/min$. We define the RL stage cost function as
\begin{align}
    &L\left(\vect x_k,\vect u_k\right)=\\\nonumber
    &\qquad(\vect x_k-\vect x_r)^\top\text{diag}([10,5000,10])(\vect x_k-\vect x_r)\\\nonumber
    &\qquad\qquad+(\vect u_k-\vect u_r)^\top\text{diag}([10,10])(\vect u_k-\vect u_r)\\\nonumber
    &\qquad\qquad\qquad\qquad\qquad+\vect\Gamma^\top\max(0,\vect h(\vect x_k,\vect u_k)).
\end{align}
\subsection{Discussions}
Fig. \ref{fig1} compares the evolution of the closed-loop performance across three methods: MPC-based Reinforcement Learning (MPC-RL), single-objective BO, and the proposed MOBO-based MPC-RL (MPC-RL-MOBO) approach. The results show that MPC-RL-MOBO achieves rapid convergence to the optimal performance $J^\star$, outperforming both baseline methods in terms of convergence speed and accuracy. Fig. \ref{fig2} further highlights the benefit of MOBO by demonstrating improved satisfaction of the necessary optimality conditions in the Compatible Deterministic Policy Gradient (CDPG) method, as indicated by the faster convergence of the advantage function $A^{\vect w}$ to zero. The accompanying 3D plot shows the evolution of the Pareto front, with red stars denoting the optimal trade-off solutions. Figures \ref{fig3} and \ref{fig4} provide state and input trajectories, respectively, showing the transition from an imperfect MPC model (green) to an adjusted MPC through learning (gray to blue). Both figures confirm the successful adaptation of the MPC parameters over time, resulting in improved tracking performance within defined state and input constraints.
\begin{figure}[htbp!]
		\centering
		\includegraphics[width=1\linewidth]{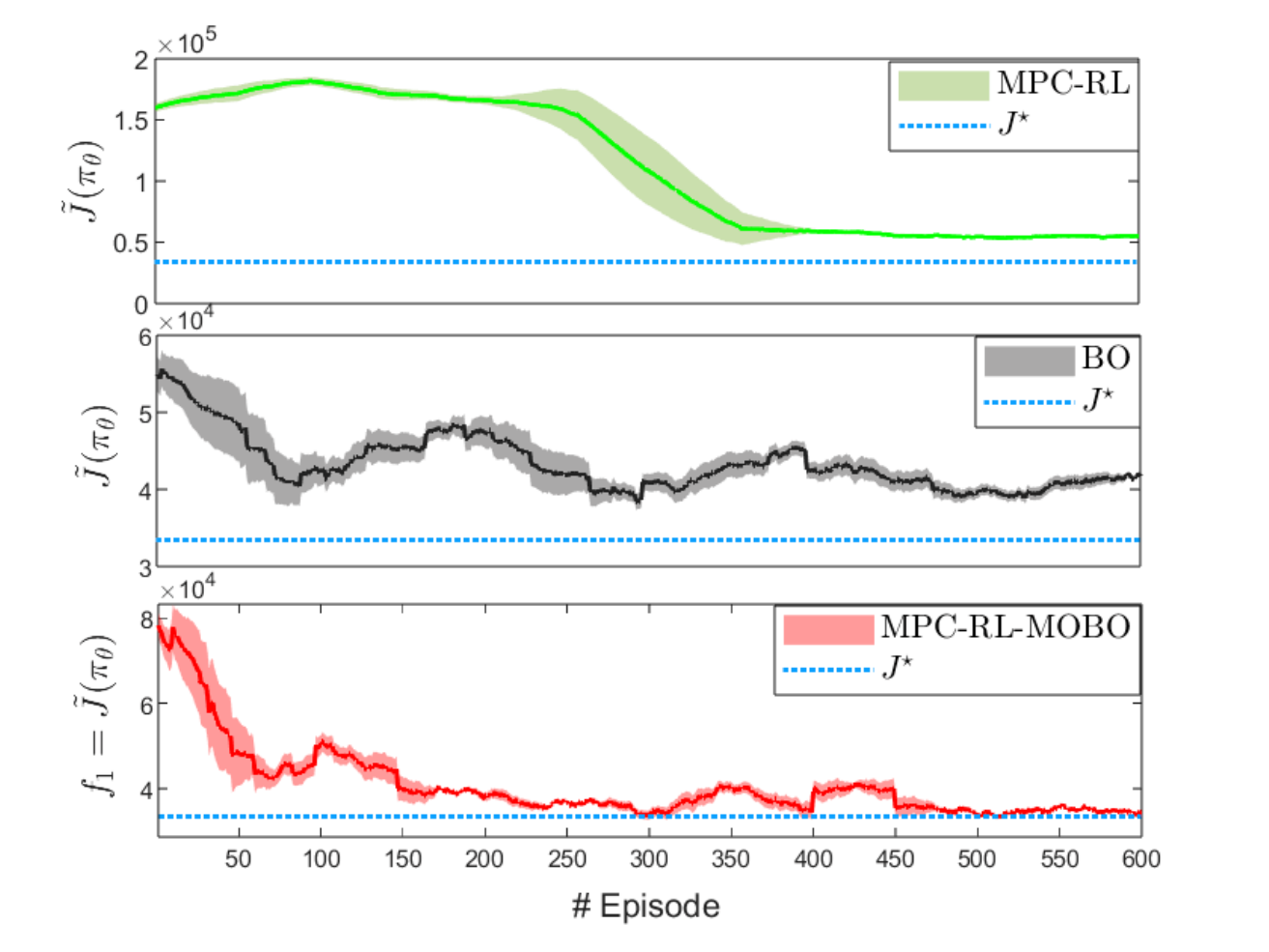}
		\caption{The green line in the first subplot illustrates the evolution of the closed-loop performance $\tilde J$ using the MPC-RL method. The gray line in the second subplot depicts the performance evolution when employing a single-objective BO method. The red line in the third subplot represents the performance under the proposed MPC-RL-MOBO. The blue dashed line indicates the closed-loop performance achieved by an MPC with a perfect model. As observed, the proposed method converges rapidly to the optimal performance $J^\star$, whereas the BO method approaches $J^\star$ with slightly lower accuracy but still relatively fast convergence. In contrast, the MPC-based RL method takes approximately $350$ episodes to approach the optimal performance with lower accuracy} 
		\label{fig1}
\end{figure}

\begin{figure}[htbp!]
		\centering
		\includegraphics[width=1\linewidth]{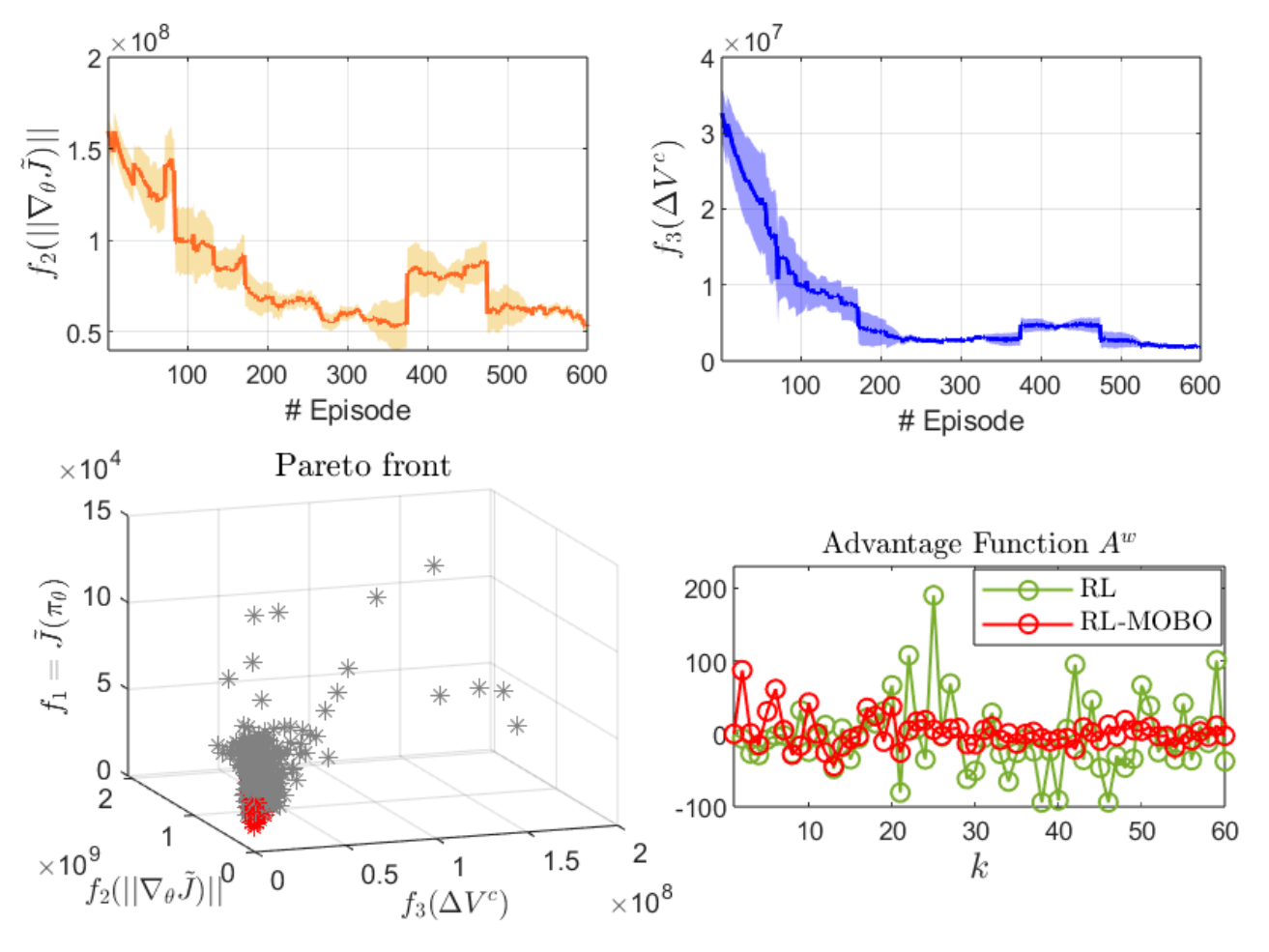}
		\caption{This figure demonstrates the effectiveness of the multi-objective Bayesian optimization (MOBO) in enhancing the performance of the MPC-based Compatible Deterministic Policy Gradient (CDPG) method. As shown, incorporating the two objective functions, $f_2,f_3$, into the MOBO framework allows for a more accurate satisfaction of the necessary optimality conditions. This is evidenced by the convergence of the advantage function $A^{\vect w}$ toward zero in the proposed MPC-RL-MOBO approach (shown in red), compared to the MPC-RL method (shown in green). The 3D plot illustrates the evolution of the \textit{Pareto front}, with the red stars indicating the optimal solutions.} 
		\label{fig2}
\end{figure}

\begin{figure}[htbp!]
		\centering
		\includegraphics[width=1\linewidth]{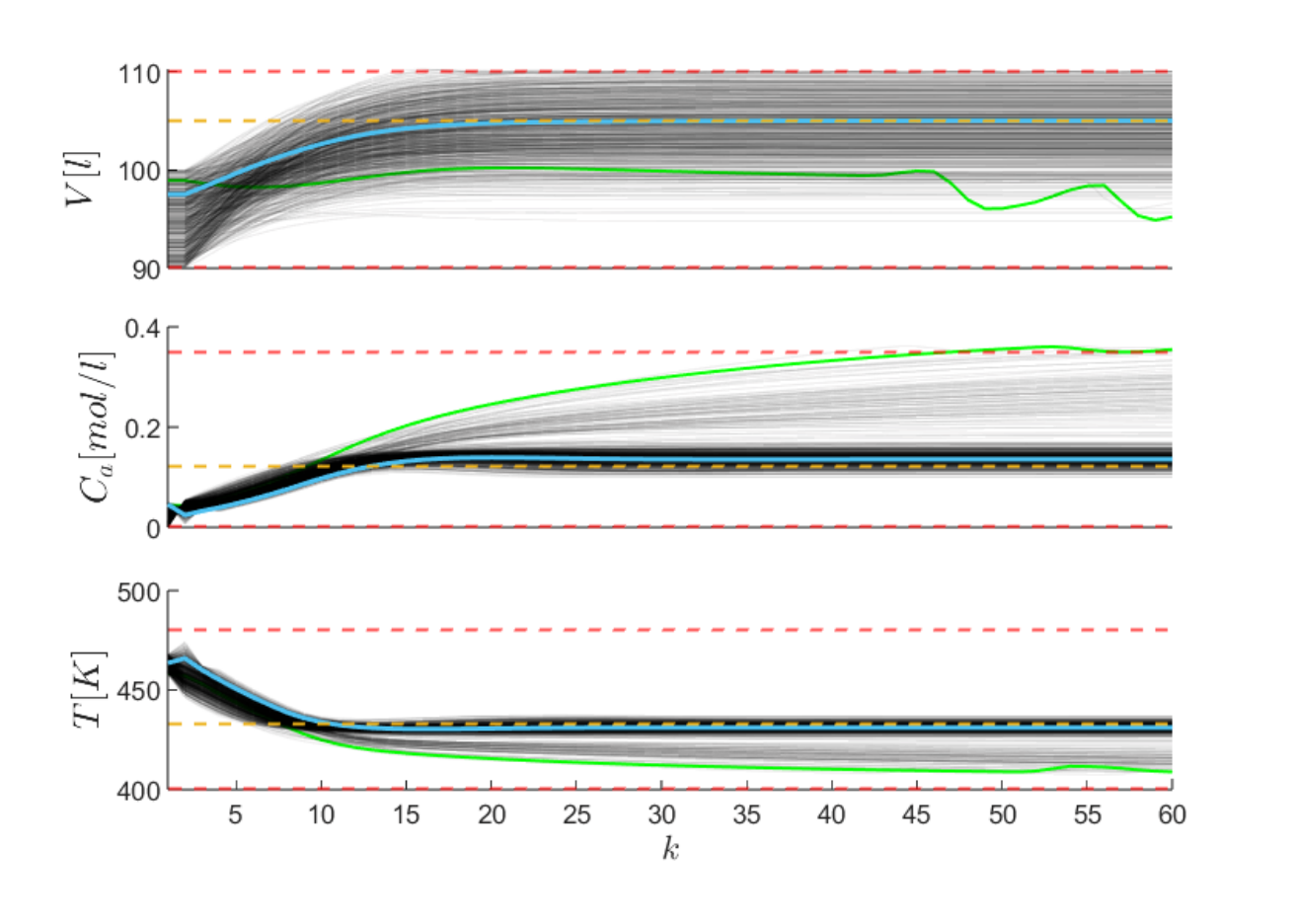}
		\caption{The green lines show the evolution of the states where an imperfect MPC model is used. We then start learning the parameterized MPC where the gray lines show the corresponding states during the learning process. Finally, the blue lines show the evolution of the states obtained using an adjusted MPC after $600$ learning episodes. The orange dashed lines are the references, and the red dashed lines illustrate the state limits.} 
		\label{fig3}
\end{figure}

\begin{figure}[htbp!]
		\centering
		\includegraphics[width=1\linewidth]{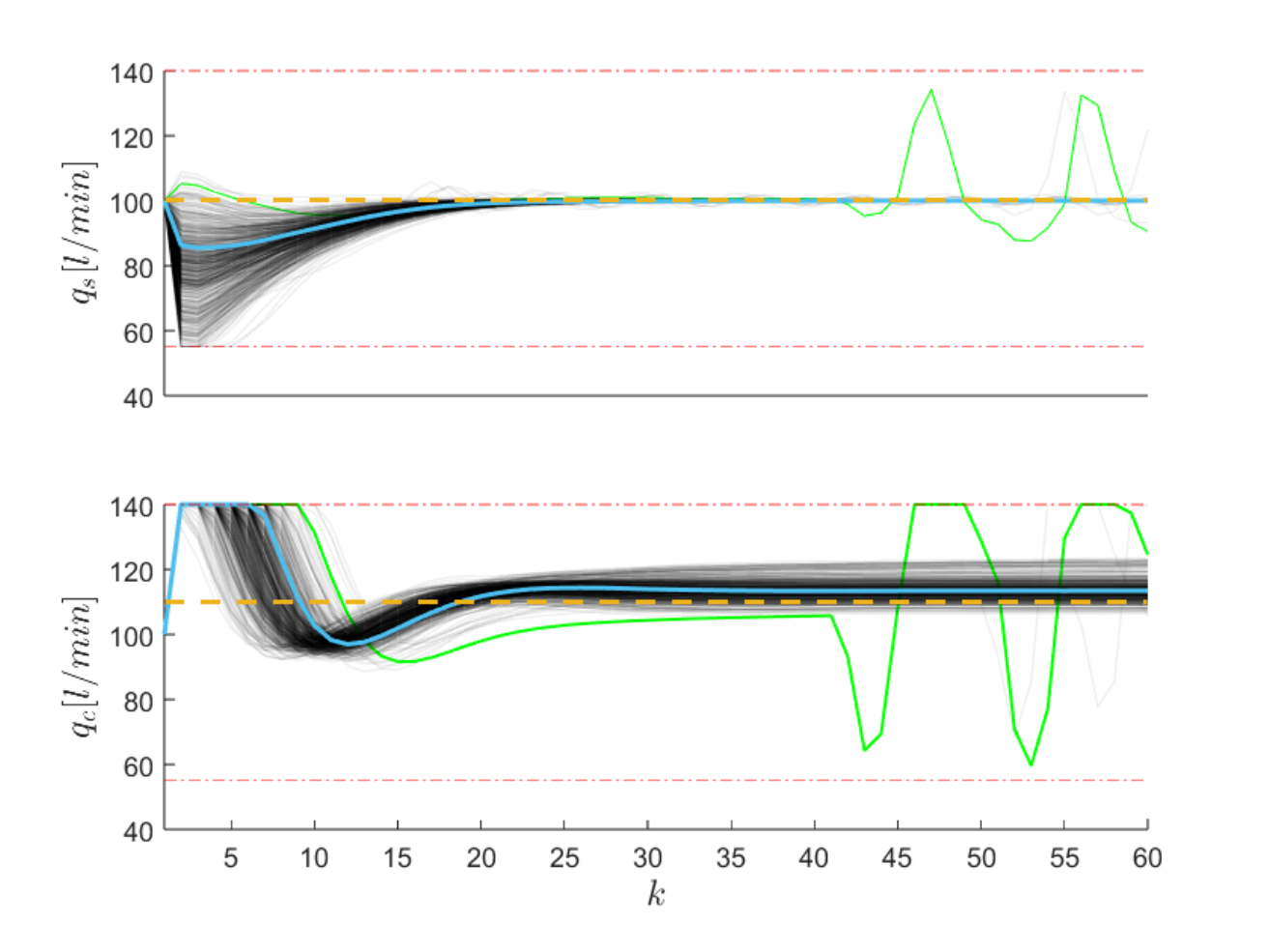}
		\caption{The green lines show the evolution of the control inputs where an imperfect MPC model is used. We then start learning the parameterized MPC where the gray lines show the corresponding control signals during the learning process. Finally, the blue lines show the evolution of the control inputs obtained using an adjusted MPC after $600$ learning episodes. The orange dashed lines are the references, and the red dashed lines show the limits.} 
		\label{fig4}
\end{figure}

\section{Conclusion}\label{sec:6}
This paper presented a novel framework that integrates Model Predictive Control (MPC)-based reinforcement learning with Multi-Objective Bayesian Optimization (MOBO) to address key limitations of existing MPC-based RL methods. Although MPC-based RL offers interpretability and lower complexity compared to deep neural network-based approaches, it suffers from slow convergence, suboptimality due to limited parameterization, and safety concerns during learning. To overcome these challenges, we introduced a sample-efficient and stability-aware learning scheme that combines a deterministic policy gradient with the MOBO framework employing the Expected Hypervolume Improvement (EHVI) acquisition function. This fusion enables adaptive tuning of the MPC parameters to improve closed-loop performance even when the underlying model is imperfect. Overall, the proposed method advances the practical applicability of MPC-based RL by enhancing convergence efficiency, safety, and robustness, paving the way for more reliable deployment in real-world control systems.
\bibliographystyle{IEEEtran}
\bibliography{IEEEabrv,ref}
\end{document}